\theoremstyle{plain}
\newtheorem{thm}{\protect\theoremname}
\theoremstyle{plain}
\newtheorem{lem}[thm]{\protect\lemmaname}
\providecommand{\U}[1]{\protect\rule{.1in}{.1in}}
\newtheorem{theorem}{Theorem}
\providecommand{\lemmaname}{Lemma}
\providecommand{\theoremname}{Theorem}
\begin{document}

\title{A Convex Framework for Optimal Investment on Disease Awareness in
Social Networks}

\author{Victor M. Preciado, Faryad Darabi Sahneh, and Caterina Scoglio}
\maketitle
\begin{abstract}
We consider the problem of controlling the propagation of an epidemic
outbreak in an arbitrary network of contacts by investing on disease
awareness throughout the network. We model the effect of agent awareness
on the dynamics of an epidemic using the SAIS epidemic model, an extension
of the SIS epidemic model that includes a state of ``awareness''.
This model allows to derive a condition to control the spread of an
epidemic outbreak in terms of the eigenvalues of a matrix that depends
on the network structure and the parameters of the model. We study
the problem of finding the cost-optimal investment on disease awareness
throughout the network when the cost function presents some realistic
properties. We propose a convex framework to find cost-optimal allocation
of resources. We validate our results with numerical simulations in
a real online social network. 
\end{abstract}

\section{Introduction}

Development of strategies to control spreading processes is a central
problem in public health and network security \cite{Bai75}. Motivated
by the problem of epidemic spread in human networks, we analyze the
problem of controlling the spread of a disease by investing on disease
awareness throughout the individuals in a social network. The dynamics
of the spread depends on both the structure of the network of contacts,
the epidemic model and the values of the parameters associated to
each individual. We model the spread using a recently proposed variant
of the popular SIS (Susceptible-Infected-Susceptible) epidemic model
that includes a state of ``awareness'' (or ``alertness''), \cite{FaryadCDC11SAIS}.
In our setting, we can modify the individual levels of awareness,
within a feasible range, by investing resources in each node. These
investments have associated costs, which can vary from individual
to individual. In this context, we propose an efficient convex framework
to find the cost-optimal investment on awareness in a social network
to control an epidemic outbreak.

The paper is organized as follows. In Section II, we introduce our
notation, as well as some background needed in our derivations. In
Section III, we formulate our problem and provide an efficient solution
based on convex optimization. In Section IV, we show simulations supporting
our results.

\section{\label{Notation}Notation \& Preliminaries}

In this section, we introduce some graph-theoretical nomenclature
and the epidemic spreading model under consideration.

\subsection{Graph Theory}

Let $\mathcal{G}=\left(\mathcal{V},\mathcal{E}\right)$ denote an
undirected graph with $n$ nodes, $m$ edges, and no self-loops%
\footnote{An undirected graph with no self-loops is also called a \emph{simple}
graph.%
}. We denote by $\mathcal{V}\left(\mathcal{G}\right)=\left\{ v_{1},\dots,v_{n}\right\} $
the set of nodes and by $\mathcal{E}\left(\mathcal{G}\right)\subseteq\mathcal{V}\left(\mathcal{G}\right)\times\mathcal{V}\left(\mathcal{G}\right)$
the set of undirected edges of $\mathcal{G}$. If $\left\{ i,j\right\} \in\mathcal{E}\left(\mathcal{G}\right)$
we call nodes $i$ and $j$ \emph{adjacent} (or neighbors), which
we denote by $i\sim j$. We define the set of neighbors of a node
$i\in\mathcal{V}$ as $\mathcal{N}_{i}=\{j\in\mathcal{V}\left(\mathcal{G}\right):\left\{ i,j\right\} \in\mathcal{E}\left(\mathcal{G}\right)\}$.
The number of neighbors of $i$ is called the \emph{degree} of node
$i$, denoted by $d_{i}$. The adjacency matrix of an undirected graph
$\mathcal{G}$, denoted by $A_{\mathcal{G}}=[a_{ij}]$, is an $n\times n$
symmetric matrix defined entry-wise as $a_{ij}=1$ if nodes $i$ and
$j$ are adjacent, and $a_{ij}=0$ otherwise%
\footnote{For simple graphs, $a_{ii}=0$ for all $i$.%
}. Since $A_{\mathcal{G}}$ is symmetric, all its eigenvalues, denoted
by $\lambda_{1}(A_{\mathcal{G}})\geq\lambda_{2}(A_{\mathcal{G}})\geq\ldots\geq\lambda_{n}(A_{\mathcal{G}})$,
are real.

\subsection{Heterogenous SAIS Model\label{Sec-Problem Statement}}

The Susceptible-Alert-Infected-Susceptible (SAIS) model proposed in
\cite{FaryadCDC11SAIS} extends the SIS epidemic model by incorporating
the response of individuals in the dynamics of the infection. In this
paper, we further extend the SAIS model by considering heterogenous
transition rates. The contact topology in this formulation is an arbitrary,
generic graph $\mathcal{G}$ with nodes representing individuals and
links representing contacts among them. Each node is allowed to be
in one of the three states \emph{`susceptible'}, \emph{`infected'},
and \emph{`alert'} (also called \emph{`aware'}). For each agent $i\in\{1,...,N\}$,
we let the random variable $x_{i}(t)=e_{1}$ if the agent $i$ is
susceptible at time $t$, $x_{i}(t)=e_{2}$ if alert, and $x_{i}(t)=e_{3}$
if infected, where $e_{1}=[1,0,0]^{T}$, $e_{2}=[0,1,0]^{T}$, and
$e_{3}=[0,0,1]^{T}$ are the standard unit vectors of $\mathbb{R}^{3}$.
If agent $i$ and agent $j$ are in contact $a_{ij}=1$, otherwise
$a_{ij}=0$. In heterogenous SAIS\ model, four possible types of
stochastic transitions determine an agent's state at each time:
\begin{enumerate}
\item Susceptible agent $i$ becomes infected by the infection rate $\beta_{i}\in\mathbb{R}^{+}$
times the number of its infected neighbors, i.e., 
\begin{multline}
\Pr[x_{i}(t+\Delta t)=e_{3}|x_{i}(t)=e_{1},X(t)]=\\
\beta_{i}Y_{i}(t)\Delta t+o(\Delta t),
\end{multline}
for $i\in\{1,...,N\}$, and $Y_{i}(t)\triangleq\sum_{j\in\mathcal{N}_{i}}a_{ij}1_{\{x_{j}(t)=e_{1}\}}$
(the number of infected neighbors of agent $i$ at time $t$).
\item Infected agent $i$ recovers back to susceptible state by the curing
rate $\delta_{i}\in\mathbb{R}^{+}$, i.e.,
\begin{equation}
P(x_{i}(t+\Delta t)=e_{1}|x_{i}(t)=e_{3},X(t))=\delta_{i}\Delta t+o(\Delta t).
\end{equation}

\item Susceptible agents might go to the alert state if surrounded by infected
individuals. Specifically, a susceptible node becomes alert with the
alerting rate $\kappa_{i}\in\mathbb{R}^{+}$ times the number of its
infected neighbors, i.e., 
\begin{multline}
P(x_{i}(t+\Delta t)=e_{2}|x_{i}(t)=e_{1},X(t))=\\
\kappa_{i}Y_{i}(t)\Delta t+o(\Delta t).
\end{multline}

\item An alert agent can get infected in a process similar to a susceptible
agent but with a smaller infection rate $r_{i}\beta_{i}$ with $0<r_{i}<1$,
i.e.,
\begin{multline}
P(x_{i}(t+\Delta t)=e_{3}|x_{i}(t)=e_{2},X(t))=\\
r_{i}\beta_{i}Y_{i}(t)\Delta t+o(\Delta t).\label{AlertingProcess}
\end{multline}

\end{enumerate}
In above equations, $\Pr[\cdot]$ denotes probability, $X(t)\triangleq\{x_{i}(t),i=1,...,N\}$
is the joint state of the network, $\Delta t>0$ is a time step, and
the indicator function $1_{\{\mathcal{X}\}}$ is one if $\mathcal{X}$\ is
true and zero otherwise.

Let $p_{i}$ and $q_{i}$ denote the probabilities of agent $i$ to
be infected and alert, respectively. Using a first-order mean-field
approximation \cite{Sahneh2013TON}, the heterogeneous SAIS\ model
is obtained as:
\begin{align}
\dot{p}_{i} & =\beta_{i}(1-p_{i}-q_{i})\sum_{j\in\mathcal{N}_{i}}a_{ij}p_{j}+r_{i}\beta_{i}q_{i}\sum_{j\in\mathcal{N}_{i}}a_{ij}p_{j}-\delta_{i}p_{i},\label{dp}\\
\dot{q}_{i} & =\kappa_{i}(1-p_{i}-q_{i})\sum_{j\in\mathcal{N}_{i}}a_{ij}p_{j}-r_{i}\beta_{i}q_{i}\sum_{j\in\mathcal{N}_{i}}a_{ij}p_{j},\label{dq}
\end{align}
for $i\in\{1,...,N\}$.

\subsection{Stability Analysis of Heterogenous SAIS Model\label{Sec-Main}}

The sufficient condition for exponential stability of heterogeneous
SIS model in \cite{PreciadoCDC13} is also a sufficient condition
for exponential die out of initial infections in heterogeneous SAIS
model (\ref{dp})-(\ref{dq}). Including an state of alertness in
the network induces a secondary dynamics that can potentially control
the spread of the disease. The following theorem states a sufficient
condition for the\emph{ }epidemics to die out in the heterogenous
SAIS\ model.

\begin{theorem}\label{SAIS_Stab} The infection probabilities of
the heterogenous SAIS model in (\ref{dp}) and (\ref{dq}) tend to
zero, i.e., $\lim_{t\to\infty}p_{i}\left(t\right)=0$, if
\begin{equation}
\lambda_{1}(LBA_{\mathcal{G}}-MD)<0,\label{dieoutcond}
\end{equation}
where $B\triangleq\mbox{diag}\{\beta_{i}\}$, $D\triangleq\mbox{diag}\{\delta_{i}\}$,
$L\triangleq diag\{r_{i}\bar{\kappa}_{i}+r_{i}\},~M\triangleq diag\{\bar{\kappa}_{i}+r_{i}\},~\bar{\kappa}_{i}\triangleq\frac{\kappa_{i}}{\beta_{i}}.$

\end{theorem}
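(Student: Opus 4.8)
The plan is to establish die-out by bounding the infection dynamics above by a linear system whose stability is governed by the matrix in~\eqref{dieoutcond}. First I would observe that the statement of the theorem only asserts $p_i(t)\to 0$ and does not reference $q_i$, so the key is to show that the probabilities $p_i$ are dominated by a comparison system. The natural starting point is to linearize~\eqref{dp}--\eqref{dq} about the disease-free equilibrium $p_i=q_i=0$, but linearization alone only gives local stability; since we want $p_i(t)\to 0$ from arbitrary initial conditions in the feasible region, I would instead seek a \emph{global} bound valid for all $0\le p_i,q_i$ with $p_i+q_i\le 1$.

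The core idea is to treat the nonlinear terms in $\dot p_i$ as a nonnegative perturbation and bound them. Rewriting~\eqref{dp} as $\dot p_i = \bigl[\beta_i(1-p_i-q_i)+r_i\beta_i q_i\bigr]\sum_{j}a_{ij}p_j-\delta_i p_i$, I would note that the bracketed coefficient is at most $\beta_i$ whenever $0\le p_i,q_i$ and $p_i+q_i\le 1$, since $(1-p_i-q_i)+r_iq_i\le 1-p_i-(1-r_i)q_i\le 1$. That immediately yields $\dot p_i\le \beta_i\sum_j a_{ij}p_j-\delta_i p_i$, i.e. $\dot p\le (BA_{\mathcal G}-D)p$ entrywise. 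However, this crude bound produces the matrix $BA_{\mathcal G}-D$, which is the \emph{SIS} die-out matrix and does not contain the awareness factors $L$ and $M$; so the awareness state must be exploited more carefully rather than discarded. The subtlety is that $q_i$ is itself driven up by infected neighbors, and this accumulated alertness feeds back to suppress $p_i$ through the $r_i\beta_i q_i$ channel being smaller than the susceptible channel.

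The main obstacle, and the heart of the proof, is therefore to capture the beneficial effect of the alert dynamics~\eqref{dq} in closed form. I expect the argument to invoke the cited heterogeneous-SIS result from~\cite{PreciadoCDC13}: one shows that along trajectories the effective infection rate seen by node $i$ is reduced by a factor reflecting the equilibrium balance between alerting ($\kappa_i$) and the reduced alert-infection rate ($r_i\beta_i$). Concretely, I would try to relate the quasi-stationary value of $q_i$ to the ratio $\bar\kappa_i=\kappa_i/\beta_i$, so that the combined coefficient multiplying $\sum_j a_{ij}p_j$ collapses to the scaling encoded in $L$ relative to the recovery term scaled by $M$. This is where the specific forms $L=\operatorname{diag}\{r_i\bar\kappa_i+r_i\}$ and $M=\operatorname{diag}\{\bar\kappa_i+r_i\}$ should emerge naturally from the fixed-point relation between~\eqref{dp} and~\eqref{dq}.

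Once the comparison system $\dot u\le (LBA_{\mathcal G}-MD)u$ is in hand, the conclusion is routine. The matrix $LBA_{\mathcal G}-MD$ is symmetric (since $A_{\mathcal G}$ is symmetric and $L$, $B$, $M$, $D$ are diagonal, though one must check the product $LBA_{\mathcal G}$ is indeed symmetric or pass to a congruent symmetrization), so condition~\eqref{dieoutcond} $\lambda_1(LBA_{\mathcal G}-MD)<0$ makes the comparison system exponentially stable. A standard comparison/monotonicity argument for cooperative systems then forces $p_i(t)\to 0$, since $p$ is bounded above by a trajectory decaying to zero and below by zero. I would finish by verifying the monotone (Kamke/quasimonotone) structure needed to justify the differential inequality comparison, which holds because the off-diagonal coupling through $A_{\mathcal G}$ is nonnegative.
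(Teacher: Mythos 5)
Your proposal has a genuine gap, and it sits exactly where you yourself locate ``the heart of the proof'': the derivation of the comparison system $\dot u\le (LBA_{\mathcal G}-MD)u$. You correctly observe that the crude bound on the bracket $\beta_i(1-p_i-q_i)+r_i\beta_i q_i\le\beta_i$ only yields the SIS matrix $BA_{\mathcal G}-D$, and you then hope that relating $q_i$ to its ``quasi-stationary value'' will make $L$ and $M$ ``emerge naturally.'' But this step is not just unexecuted --- it cannot be executed as a trajectory-wise differential inequality. If $q_i(0)=0$ and $p_i(0)$ is small, then at $t=0$ the coefficient of $\sum_j a_{ij}p_j$ in $\dot p_i$ is $\beta_i(1-p_i)\approx\beta_i$, which is strictly larger than the awareness-corrected effective rate $\beta_i r_i(\bar\kappa_i+1)/(\bar\kappa_i+r_i)$ implicit in $LBA_{\mathcal G}-MD$ (note $r_i(\bar\kappa_i+1)/(\bar\kappa_i+r_i)<1$ for $r_i<1$). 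Alertness takes time to build up, so the suppression encoded in $L$ and $M$ is an \emph{equilibrium} property, not a pointwise-in-time bound, and the monotone-comparison machinery you invoke at the end has nothing valid to compare against. A comparison argument could in principle be salvaged (e.g., via a two-stage bound or a Lyapunov function on the joint $(p,q)$ state), but that is a substantively different and harder argument than the one you sketch.

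The paper avoids this difficulty entirely by working at equilibrium rather than along trajectories. Setting $\dot q_i=0$ in (\ref{dq}) gives the exact relation $q_i^{\ast}\sum a_{ij}p_j^{\ast}=\frac{\bar\kappa_i}{\bar\kappa_i+r_i}(1-p_i^{\ast})\sum a_{ij}p_j^{\ast}$, which substituted into (\ref{dp}) yields the scalar fixed-point equation $\frac{p_i^{\ast}}{1-p_i^{\ast}}=\frac{\beta_i}{\delta_i}\bigl(r_i\frac{\bar\kappa_i+1}{\bar\kappa_i+r_i}\bigr)\sum a_{ij}p_j^{\ast}$; this is precisely where the factors in $L$ and $M$ come from. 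A bifurcation analysis in a global scaling parameter $\theta$ then reduces the existence of a nontrivial equilibrium to a Perron--Frobenius eigenvalue problem, and the threshold $\theta_c=1/\lambda_1(HA_{\mathcal G})$ is shown to coincide with condition (\ref{dieoutcond}). You should either adopt this equilibrium route or supply the missing dynamical argument; as written, the central inequality of your proof is false along general trajectories.
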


\begin{proof} Denoting steady state values of infection and alertness
probabilities of agent $i$ with $p_{i}^{\ast}$ and $q_{i}^{\ast}$,
respectively, we find from (\ref{dq})
\begin{equation}
q_{i}^{\ast}\sum a_{ij}p_{j}^{\ast}=\frac{\bar{\kappa}_{i}}{\bar{\kappa}_{i}+r_{i}}(1-p_{i}^{\ast})\sum a_{ij}p_{j}^{\ast}.
\end{equation}
Replacing for $q_{i}^{\ast}\sum a_{ij}p_{j}^{\ast}$ in (\ref{dp})
and some straightforward algebra, $p_{i}^{\ast}$ is found to satisfy
\begin{equation}
\frac{p_{i}^{\ast}}{1-p_{i}^{\ast}}=\frac{\beta_{i}}{\delta_{i}}(r_{i}\frac{\bar{\kappa}_{i}+1}{\bar{\kappa}_{i}+r_{i}})\sum a_{ij}p_{j}^{\ast}.\label{piss}
\end{equation}
Healthy state $p_{i}^{\ast}=0$ for $i\in\{1,...,N\}$ is always a
solution to (\ref{piss}). However, the healthy equilibrium is not
necessarily stable. Indeed, in order to find the die-out condition,
we seek existence of nontrivial solutions of the equilibrium equation
(\ref{piss}). Assume that the infection rate $\beta_{i}$ can be
decomposed $\beta_{i}=\theta\bar{\beta}_{i}$, so that auxiliary parameter
$\theta$ can globally vary the values of infection rate of all agents.
The idea of the remaining is to use bifurcation analysis with respect
to $\theta$ for 
\begin{equation}
\frac{p_{i}^{\ast}}{1-p_{i}^{\ast}}=\theta\frac{\bar{\beta}_{i}}{\delta_{i}}(r_{i}\frac{\bar{\kappa}_{i}+1}{\bar{\kappa}_{i}+r_{i}})\sum a_{ij}p_{j}^{\ast}.\label{pisstheta}
\end{equation}

Since by definition $\frac{\beta_{i}}{\delta_{i}}(r_{i}\frac{\bar{\kappa}_{i}+1}{\bar{\kappa}_{i}+r_{i}})>0$,
(\ref{piss}) implies that if contact graph $\mathcal{G}$ is connected,
either infection probabilities of all agents are zero or they are
all absolutely positive. Therefore, a critical value $\theta_{c}$
for $\theta$ must exists such that for $\theta=\theta_{c}$,
\begin{equation}
p_{i}^{\ast}=0,\frac{dp_{i}^{\ast}}{d\theta}|_{\theta=\theta_{c}}>0.
\end{equation}
Taking derivative of both sides of (\ref{pisstheta}) w.r.t $\theta$
yields
\begin{align}
\frac{1}{(1-p_{i}^{\ast})^{2}}\frac{dp_{i}^{\ast}}{d\theta} & =\frac{\bar{\beta}_{i}}{\delta_{i}}(r_{i}\frac{\bar{\kappa}_{i}+1}{\bar{\kappa}_{i}+r_{i}})\sum a_{ij}p_{j}^{\ast}\nonumber \\
 & +\theta\frac{\bar{\beta}_{i}}{\delta_{i}}(r_{i}\frac{\bar{\kappa}_{i}+1}{\bar{\kappa}_{i}+r_{i}})\sum a_{ij}\frac{dp_{j}^{\ast}}{d\tau}.
\end{align}
Replacing for $p_{i}^{\ast}=0$ at $\theta=\theta_{c}$ reduces to
\begin{equation}
\frac{dp_{i}^{\ast}}{d\theta}|_{\theta=\theta_{c}}=\theta_{c}\frac{\bar{\beta}_{i}}{\delta_{i}}(r_{i}\frac{\bar{\kappa}_{i}+1}{\bar{\kappa}_{i}+r_{i}})\sum a_{ij}\frac{dp_{j}^{\ast}}{d\theta}|_{\theta=\theta_{c}}.\label{threshold_eq}
\end{equation}
Hence, the critical value $\theta_{c}$ is such that (\ref{threshold_eq})
has a positive solution for $\frac{dp_{i}^{\ast}}{d\theta}|_{\theta=\theta_{c}}$.
Equation (\ref{threshold_eq}) is actually a Perron-Frobenius problem
\begin{equation}
w=\theta_{c}HAw,
\end{equation}
where
\begin{equation}
w\triangleq\lbrack\frac{dp_{1}^{\ast}}{d\theta}|_{\theta=\theta_{c}},...,\frac{dp_{N}^{\ast}}{d\theta}|_{\theta=\theta_{c}}],
\end{equation}
and
\begin{equation}
H\triangleq diag\{\frac{\bar{\beta}_{i}}{\delta_{i}}(r_{i}\frac{\bar{\kappa}_{i}+1}{\bar{\kappa}_{i}+r_{i}})\},
\end{equation}
which is guaranteed to have a positive solution $w>0$ with $\theta_{c}=\dfrac{1}{\lambda_{1}(HA)}$
if contact graph is connected. Therefore, for $\theta<\theta_{c}$
healthy state is the only equilibrium while for $\theta>\theta_{c}$
a non-healthy equilibrium exists. Equivalently, rewriting (\ref{threshold_eq})
as
\begin{equation}
(\bar{\kappa}_{i}+r_{i})\delta_{i}\frac{dp_{i}^{\ast}}{d\theta}|_{\theta=\theta_{c}}=\theta_{c}\bar{\beta}_{i}r_{i}(\bar{\kappa}_{i}+1)\sum a_{ij}\frac{dp_{j}^{\ast}}{d\theta}|_{\theta=\theta_{c}},
\end{equation}
suggests that under (\ref{dieoutcond}) healthy state is the only
possible equilibrium. \end{proof}

\section{Optimal Investment in Disease Awareness}

Imagine now that we want to start an awareness campaign to control
the spread of an epidemic outbreak in a social network: On what nodes
in the network should we invest our efforts in order to contain the
spread in a cost-optimal manner? To answer this question, we develop
in this section an optimization framework to find the optimal distribution
of resources to control the values of the awareness rates in the network,
$\left\{ \kappa_{i}\right\} _{i\in\mathcal{V}}$, assuming there is
a known cost function $f_{i}\left(\kappa_{i}\right)$ that measures
the monetary cost of bringing the level of awareness of node $i$
to the level $\kappa_{i}$. Therefore, the total cost of our awareness
campaign is $C\triangleq\sum_{i=1}^{n}f_{i}\left(\kappa_{i}\right)$.

\subsection{The Cost of Disease Awareness}

We assume that we are only able to control the awareness level of
an individual within a certain interval $\underline{\kappa_{i}}\leq\kappa_{i}\leq\overline{\kappa_{i}}$.
Also, we assume the cost functions presents the following properties:
(\emph{a}) $f_{i}\left(\overline{\kappa_{i}}\right)=\overline{C_{i}}$,
i.e., $\overline{C_{i}}>0$ is the cost of bringing node $i$ to its
maximum level of awareness; and (\emph{b}) $f_{i}\left(\kappa_{i}\right)$
is nondecreasing for $x_{i}\in\left[\underline{\kappa_{i}},\overline{\kappa_{i}}\right]$.
In what follows, we develop an optimization framework to find the
cost-optimal distribution of awareness throughout the population to
control an epidemic outbreak.

\subsection{Convex Optimization Framework}

Theorem \ref{SAIS_Stab} provides a condition to control an epidemic
outbreak in a network of agents following the heterogeneous SAIS dynamics
in (\ref{dp}) and (\ref{dq}). Hence, the cost-optimal investment
on awareness can be found by solving the following optimization program:
\begin{align}
\min_{\left\{ \kappa_{i}\right\} } & \sum_{i}f_{i}\left(\kappa_{i}\right)\label{eq:MainOpt}\\
\mbox{s.t. } & \lambda_{1}(LBA_{\mathcal{G}}-MD)\leq0,\nonumber \\
 & \underline{\kappa_{i}}\leq\kappa_{i}\leq\overline{\kappa_{i}},\nonumber 
\end{align}
Notice that, if solved, this optimization program would find the cost-optimal
levels of awareness, $\left\{ \kappa_{i}^{*}\right\} _{i=1}^{n}$,
to control an epidemic outbreak. In what follows, we provide a computationally
efficient optimization framework to find the optimal investment profile
under certain assumptions on the cost function.

First, we show how to rewrite the spectral condition in (\ref{eq:MainOpt})
as a semidefinite constraint:
\begin{lem}
For $A_{\mathcal{G}}$ symmetric, we have that $\lambda_{1}(LBA_{\mathcal{G}}-MD)\leq0$,
if and only if
\begin{equation}
A_{\mathcal{G}}-\mbox{diag}\left\{ \frac{r_{i}\delta_{i}+\kappa_{i}\delta_{i}/\beta_{i}}{r_{i}\beta_{i}+r_{i}\kappa_{i}}\right\} \preceq0.\label{eq:SDPcondition}
\end{equation}
\end{lem}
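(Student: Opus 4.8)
The plan is to reduce the spectral condition on the (non-symmetric) matrix $LBA_{\mathcal{G}}-MD$ to a semidefinite condition by exploiting the fact that $L$, $B$, $M$, $D$ are all positive diagonal matrices. First I would observe that $LB$ and $MD$ are themselves diagonal with strictly positive entries, so I set $P\triangleq LB$ and $Q\triangleq MD$ and rewrite the matrix of interest as $PA_{\mathcal{G}}-Q$. The immediate obstacle is that $PA_{\mathcal{G}}-Q$ is in general not symmetric (the product of a diagonal and a symmetric matrix need not be symmetric), so the quantity $\lambda_{1}$ and its relation to negative semidefiniteness are not yet meaningful.

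To circumvent this, I would apply the similarity transformation $S=P^{1/2}$, which is well-defined and invertible precisely because $P$ has positive diagonal entries. A short computation, using that diagonal matrices commute, gives
\begin{equation}
S^{-1}(PA_{\mathcal{G}}-Q)S=P^{1/2}A_{\mathcal{G}}P^{1/2}-Q,
\end{equation}
which is symmetric. Since similar matrices share the same spectrum, the eigenvalues of $LBA_{\mathcal{G}}-MD$ are real and coincide with those of the symmetric matrix on the right-hand side; in particular $\lambda_{1}(LBA_{\mathcal{G}}-MD)\leq0$ holds if and only if $P^{1/2}A_{\mathcal{G}}P^{1/2}-Q\preceq0$.

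The final step is a congruence transformation by the invertible symmetric matrix $P^{-1/2}$, which preserves negative semidefiniteness. Conjugating yields
\begin{equation}
P^{-1/2}\!\left(P^{1/2}A_{\mathcal{G}}P^{1/2}-Q\right)\!P^{-1/2}=A_{\mathcal{G}}-P^{-1}Q\preceq0,
\end{equation}
so the condition becomes $A_{\mathcal{G}}-P^{-1}Q\preceq0$. It then remains only to compute the diagonal entries of $P^{-1}Q=\mbox{diag}\{(MD)_{i}/(LB)_{i}\}$ and verify they equal the claimed expression: using $L_{i}=r_{i}(\bar{\kappa}_{i}+1)$, $M_{i}=\bar{\kappa}_{i}+r_{i}$, and $\bar{\kappa}_{i}=\kappa_{i}/\beta_{i}$ gives $(LB)_{i}=r_{i}(\beta_{i}+\kappa_{i})$ and $(MD)_{i}=\delta_{i}(r_{i}+\kappa_{i}/\beta_{i})$, whose ratio is exactly $\frac{r_{i}\delta_{i}+\kappa_{i}\delta_{i}/\beta_{i}}{r_{i}\beta_{i}+r_{i}\kappa_{i}}$.

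The main conceptual point — and the only place requiring genuine care — is the passage from the non-symmetric matrix $LBA_{\mathcal{G}}-MD$ to a symmetric one through a positive-diagonal similarity, which is what makes the eigenvalues real and legitimizes comparing $\lambda_{1}\leq0$ with negative semidefiniteness. Once that reduction is in place, the remaining two steps (a congruence by $P^{-1/2}$ and an entrywise identity) are routine.
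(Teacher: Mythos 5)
Your proposal is correct and follows essentially the same route as the paper's own proof: a positive-diagonal similarity by $(LB)^{1/2}$ to obtain the symmetric matrix $(LB)^{1/2}A_{\mathcal{G}}(LB)^{1/2}-MD$, followed by a congruence with $(LB)^{-1/2}$ and an entrywise computation of $(LB)^{-1}MD$. Your write-up is, if anything, slightly more careful than the paper's in spelling out why the similarity symmetrizes the matrix and why the eigenvalues are therefore real.
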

\begin{IEEEproof}
Notice that $LBA_{\mathcal{G}}-MD$ is a matrix similar to $\left(LB\right)^{1/2}A_{\mathcal{G}}\left(LB\right)^{1/2}-MD$.
Since this matrix is symmetric, its eigenvalues are real. Then, we
have that $\lambda_{1}(LBA_{\mathcal{G}}-MD)\leq0$ if and only if
$\left(LB\right)^{1/2}A_{\mathcal{G}}\left(LB\right)^{1/2}-MD\preceq0$.
Then, we apply a congruence transformation by pre- and post-multiplying
by the diagonal matrix $\left(LB\right)^{-1/2}$ to obtain that $\left(LB\right)^{1/2}A_{\mathcal{G}}\left(LB\right)^{1/2}-MD\preceq0$
if and only if $A_{\mathcal{G}}-\left(LB\right)^{-1}MD\preceq0$.
(Notice that $L,B,M,$ and $D$ are all diagonal matrices.) From the
definitions of $L,B,M,D$ in Theorem \ref{SAIS_Stab}, we have that
\[
\left(LB\right)^{-1}MD=\mbox{diag}\left\{ \frac{r_{i}\delta_{i}+\kappa_{i}\delta_{i}/\beta_{i}}{r_{i}\beta_{i}+r_{i}\kappa_{i}}\right\} ,
\]
proving the statement of our Lemma.
\end{IEEEproof}
Using the above lemma, we can rewrite the optimization problem in
(\ref{eq:MainOpt}) as
\begin{align}
\min_{\left\{ \boldsymbol{\kappa}_{i}\right\} } & \sum_{i}f_{i}\left(\kappa_{i}\right)\label{eq:DiagOpt}\\
\mbox{s.t. } & A_{\mathcal{G}}-\mbox{diag}\left\{ \frac{r_{i}\delta_{i}+(\delta_{i}/\beta_{i})\kappa_{i}}{r_{i}\beta_{i}+r_{i}\kappa_{i}}\right\} \preceq0,\nonumber \\
 & \underline{\kappa_{i}}\leq\kappa_{i}\leq\overline{\kappa_{i}}.\nonumber 
\end{align}
We now prove that the above optimization program is quasiconvex. Define
the new set of variables $y_{i},\kappa_{i}$, as $y_{i}\triangleq\frac{r_{i}\delta_{i}+\kappa_{i}\delta_{i}/\beta_{i}}{r_{i}\beta_{i}+r_{i}\kappa_{i}},$
or equivalently,
\begin{equation}
\kappa_{i}=\frac{y_{i}r_{i}\beta_{i}-r_{i}\delta_{i}}{\delta_{i}/\beta_{i}-y_{i}r_{i}}\triangleq g_{i}\left(y_{i}\right).\label{eq:g_i}
\end{equation}
The function $g$ is a linear-fractional function and, therefore,
quasiconvex if $y_{i}\in\left\{ y\mbox{ s.t. }\delta_{i}/\beta_{i}-y_{i}r_{i}>0\right\} $,
\cite{Boy04}. This condition can be proved to always be true because
$r_{i}<1$ by definition. Hence, defining $Y=\mbox{diag}\{y_{i}\}$,
the optimization problem in (\ref{eq:DiagOpt}) can be written as
\begin{align*}
\min_{y_{i}} & \sum_{i}h_{i}\left(y_{i}\right)\\
\mbox{s.t. } & A_{\mathcal{G}}-Y\preceq0,\\
 & \underline{y_{i}}\leq y_{i}\leq\overline{y_{i}},
\end{align*}
where $h_{i}\left(y_{i}\right)\triangleq\left(f_{i}\circ g_{i}\right)\left(\kappa_{i}\right)$,
$\underline{y_{i}}\triangleq h(\overline{\kappa_{i}})$ and $\overline{y_{i}}\triangleq h(\underline{\kappa_{i}})$.
Since $f_{i}$ is a nondecreasing function and $g_{i}$ is a linear-fractional
transformation, the composition function $f_{i}\circ g_{i}$ is quasiconvex
\cite{Boy04}. This problem is not, in general, a semidefinite program
\cite{VB96}, due to the linear-fractional cost. It is, instead, quasiconvex
and can still be efficiently solved using, for example, the results
in \cite{BE93} (and references therein).

\subsection{SDP Formulation}

We can achieve a semidefinite formulation of our problem when the
cost function presents a particularly interesting structure. Assume
that the cost function is a linear-fractional function, as follows
\begin{equation}
f_{i}\left(\kappa_{i}\right)=\frac{c_{i}+s_{i}\kappa_{i}}{r_{i}\beta_{i}+r_{i}\kappa_{i}},\label{eq:LinearFracCost}
\end{equation}
where $r_{i},\beta_{i}$ are given parameters of the SAIS model, and
$c_{i},s_{i}$ are free parameters that can be adjusted to fit the
properties of the cost function. We are interested in nondecreasing
cost functions for which $f_{i}\left(\overline{\kappa_{i}}\right)=\overline{C_{i}}$.
These conditions are satisfied when $c_{i}$ and $s_{i}$ satisfy
$c_{i}=\overline{C_{i}}r_{i}\left(\beta_{i}+\overline{\kappa_{i}}\right)-s_{i}\overline{\kappa_{i}},$
and $s_{i}>\overline{C_{i}}/2.$ Moreover, if we want to satisfy $f_{i}\left(\underline{\kappa_{i}}\right)=0$,
we should simply choose $s_{i}=C_{i}r_{i}(\beta_{i}+\overline{\kappa_{i}})/(\overline{\kappa_{i}}-\underline{\kappa_{i}})$.

We can transform the optimization problem in (\ref{eq:DiagOpt}) with
the linear-fractional cost function (\ref{eq:LinearFracCost}) into
an SDP and efficiently solve it using standard optimization software
(such as cvx). In particular, let us perform the change of variables
\cite{CC62} 
\begin{align}
u_{i} & \triangleq\frac{\kappa_{i}}{r_{i}\beta_{i}+r_{i}\kappa_{i}},\label{eq:ui}\\
w_{i} & \triangleq\frac{1}{r_{i}\beta_{i}+r_{i}\kappa_{i}}.\label{eq:wi}
\end{align}
To avoid singularities in the transformation, we assume $r_{i}\beta_{i}+r_{i}\kappa_{i}>0$,
or equivalently $w_{i}>0$. Then, the objective function (\ref{eq:LinearFracCost})
and the elements of the diagonal matrix in the first constraint of
(\ref{eq:DiagOpt}) can be written as
\begin{align*}
\frac{c_{i}+s_{i}\kappa_{i}}{r_{i}\beta_{i}+r_{i}\kappa_{i}} & =c_{i}w_{i}+s_{i}u_{i},\\
\frac{r_{i}\delta_{i}+(\delta_{i}/\beta_{i})\kappa_{i}}{r_{i}\beta_{i}+r_{i}\kappa_{i}} & =r_{i}\delta_{i}w_{i}+(\delta_{i}/\beta_{i})u_{i},
\end{align*}
respectively, which are linear functions on the new variables $u_{i},w_{i}$.
To cast our problem as an SDP, we also need to rewrite the constraints
$\underline{\kappa_{i}}\leq\kappa_{i}\leq\overline{\kappa_{i}}$ in
terms of $u_{i},w_{i}$. We can do so by multiplying the three terms
in these inequalities $\underline{\kappa_{i}}\leq\kappa_{i}\leq\overline{\kappa_{i}}$
by $\left(r_{i}\beta_{i}+r_{i}\kappa_{i}\right)^{-1}$. Hence, we
obtain that $\underline{\kappa_{i}}w_{i}\leq u_{i}\leq\overline{x_{i}}w_{i}$,
which are linear inequalities on the new variables $u_{i},w_{i}$.

Finally, we obtain the SDP formulation for the optimization problem
(\ref{eq:DiagOpt}) with the linear-fractional cost function (\ref{eq:LinearFracCost})

\begin{align}
\min_{\left\{ u_{i},w_{i}\right\} } & \sum_{i=1}^{n}c_{i}w_{i}+s_{i}u_{i}\label{eq:SDP}\\
\mbox{s.t. } & A_{\mathcal{G}}-FW-GU\preceq0,\nonumber \\
\mbox{} & \underline{\kappa_{i}}w_{i}\leq u_{i}\leq\overline{\kappa_{i}}w_{i},\nonumber \\
\mbox{(a)} & w_{i}\geq0,\nonumber \\
\mbox{(b)} & r_{i}\beta_{i}w_{i}+r_{i}u_{i}=1.\nonumber 
\end{align}
where $U\triangleq\mbox{diag}\{u_{i}\}$ and $W\triangleq\mbox{diag}\{w_{i}\}$
are decision variables, and $F\triangleq\mbox{diag}\{r_{i}\delta_{i}\}$
and $G\triangleq\mbox{diag}\{\delta_{i}/\beta_{i}\}$ are matrices
of parameters. Notice that the constraint (a) imposes $r_{i}\beta_{i}+r_{i}\kappa_{i}>0$
in order to avoid singularities in our transformation, and the constraint
in (b) is required to guarantee that $r_{i}\beta_{i}w_{i}+r_{i}u_{i}=\frac{r_{i}\beta_{i}+r_{i}\kappa_{i}}{r_{i}\beta_{i}+r_{i}\kappa_{i}}=1.$

\section{Simulations}

We consider a small social network with $n=247$ nodes and adjacency
matrix $A_{FB}$, extracted from a real online social network. This
social network has $m=940$ edges and its largest eigenvalue is $\lambda_{1}=13.52$.
In our experiments, we are choosing a homogeneous recovery rate $\delta_{i}=\delta=1/7$.
We are also choosing a homogeneous infection rate $\beta_{i}=\beta=1.5\times\delta/\lambda_{1}=7.4e-3$.
(Notice that this recovery rate is not enough to control an epidemic
in the classical SIS model \cite{van2009TN}.) Awareness in the model
is quantified by two node-dependent parameters: $r_{i}$ and $\kappa_{i}$.
In our experiments, we choose a homogeneous $r_{i}=r=1/2$, i.e.,
aware individuals get infected at half the rate as ``unaware'' individuals. 

Our objective is to find the cost-optimal values for $\{\kappa_{i}\}_{i}$
to control an epidemic outbreak. Since $\kappa_{i}$ is the rate in
which susceptible individuals become aware of an epidemic, we can
interpret the value of $\kappa_{i}$ as the level of awareness of
node $i$. In our experiments, we assume that the level of awareness
can vary in the interval $\kappa_{i}\in\left[0,0.024\right]$. Moreover,
the cost of bringing an individual to a level of awareness $\kappa_{i}$
is given by $f_{i}$ in (\ref{eq:LinearFracCost}) where we have chosen
$c_{i}$ and $s_{i}$ such that $f_{i}$ satisfy both $f_{i}\left(\underline{\kappa_{i}}\right)=0$
and $f_{i}\left(\overline{\kappa_{i}}\right)=\overline{C_{i}}=1$.

Using these parameters, we run the optimization program in (\ref{eq:SDP})
using the standard software cvx. In Fig. 1, we present a scatter plot
with $247$ data points (as many as individuals in the network), where
each point has an abscissa equal to $f_{i}(\kappa_{i})$ (the investment
made on node $i$ to increase its level of awareness) and an ordinate
of $d_{i}$ (the degree of node $i$). We observe that there is a
strong dependence between the optimal level of investment in node
and its degree.

\begin{figure}
\includegraphics[scale=0.27]{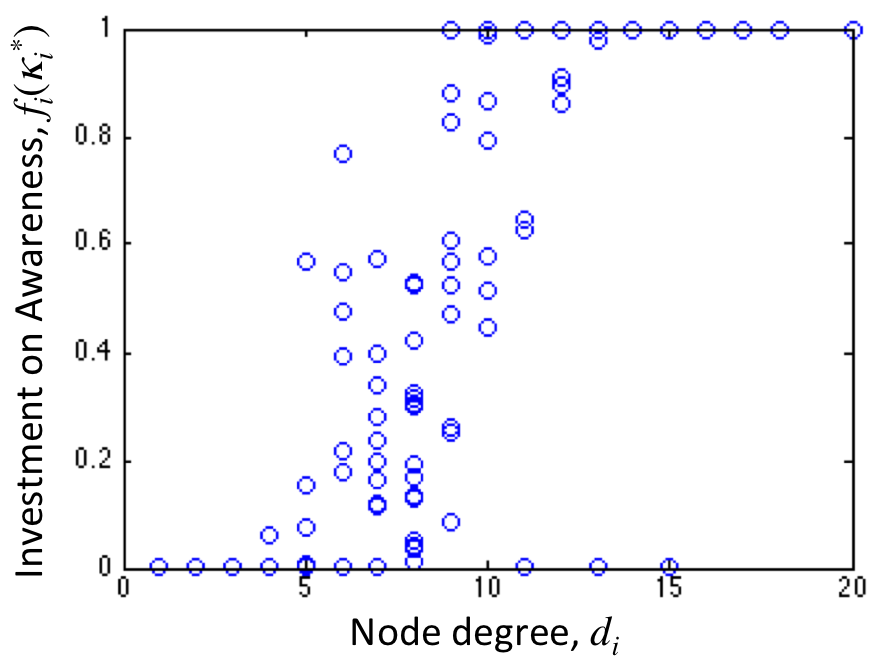}

\caption{Investment on awareness for each one of the 247 nodes in a social
network versus its degree.}
\end{figure}

\section{Conclusions}

We have proposed a convex optimization framework to find the cost-optimal
investment on awareness in a network of individuals. Our work is based
on the SAIS model \cite{FaryadCDC11SAIS}, which incorporates social-aware
response of individuals to epidemic spreading. We have derived a condition
to control the spread of an epidemic outbreak in terms of the eigenvalues
of a matrix that depends on the network structure and the parameters
of the model. We have then proposed an optimization program to find
the cost-optimal investment on disease awareness throughout the network.
Under mild assumptions on the cost function structure, we were able
to successfully cast this program into a convex optimization framework
allowing for an efficient solution.

\bibliographystyle{ieeetr}
\bibliography{ViralSpread}

\end{document}